% Paper specific definitions
\def\cubicMetric{CM}
\def\operatorRMS[#1]{{\rm rms}\{#1\}}
\def\vNorm{v_{\rm norm}(\timeVar)}
\def\vNormCubic{v^3_{\rm norm}(\timeVar)}

\def\seqGa{\textit{\textbf{a}}}
\def\seqGb{\textit{\textbf{b}}}
\def\seqGc{\textit{\textbf{c}}}
\def\seqGd{\textit{\textbf{d}}}
\def\seqGf{\textit{\textbf{f}}}
\def\seqGg{\textit{\textbf{g}}}
\def\seqGx{\textit{\textbf{x}}}

\def\seqGs{\textit{\textbf{s}}}

\def\seqGyShift[#1]{{\textit{\textbf{y}}}_{#1}}

\def\numberOfClusters{N_{\rm rb}}
\def\RBsize{N_{\rm sc}}
\def\numberOfNulls{N_{\rm null}}

\def\vecyShift[#1]{{\bf y}_{#1}}
\def\shift[#1]{s_{#1}}

\def\eleGa[#1]{{a}_{#1}}
\def\eleGb[#1]{{b}_{#1}}

\def\indexEleOfSeq{i}

\def\upsampleVarA{k}
\def\upsampleVarB{l}
\def\separationVar{m}

\def\symbolDuration{T_{\rm s}}

\def\lengthGaGb{N}
\def\lengthGcGd{M}

\def\apac[#1][#2]{\rho_{#1}(#2)}
\def\apacPositive[#1][#2]{\rho^{+}_{#1}(#2)}
\def\lagForCorrelation{k}

\def\elex[#1]{{x}_{#1}}
\def\eles[#1]{{s}_{#1}}
\def\eley[#1]{{y}_{#1}}

\def\complexNumbers{\mathbb{C}}

\def\integers{\mathbb{Z}}
\def\integersPositive{\mathbb{Z}^{+}}

\def\flipConjugate[#1]{{{\tilde{#1}}}}

\def\polySeq[#1][#2]{p_{#1}(#2)}
\def\polyVariable{z}

\def\constante{{\rm e}}
\def\constanti{{\rm i}}
\def\constantMinusi{{\rm j}}
\def\constantOne{{\rm +}}
\def\constantMinusOne{{\rm -}}

\def\timeVar{t}

\def\angleGolay[#1]{\omega_{#1}}
\def\separationGolay[#1]{d_{#1}}

\def\exponentialBase{\xi}

\def\upsampleOp[#1][#2]{{\uparrow_{#1}\{#2\}}}

\newcommand\mydots{\hbox to 1em{.\hss.\hss.}}

% Latex Packages
\documentclass[journal]{IEEEtran}

\usepackage{multicol}
\usepackage{lipsum}
\usepackage{mathtools}
\usepackage{amsthm}
\usepackage{acronym}
\usepackage{stfloats}
\usepackage{xcolor}
\usepackage{amsfonts}
\usepackage{acronym}
\usepackage{cite}
\usepackage{bm}
\usepackage{amsmath,amssymb}
\usepackage{graphicx}
\usepackage[english]{babel}
\usepackage[caption=false,font=footnotesize,skip=0pt]{subfig}
\usepackage[geometry]{ifsym}
\usepackage{array}
\usepackage[utf8]{inputenc}
\usepackage[T1]{fontenc}
\usepackage{array}
\usepackage{algorithm}
\usepackage{algpseudocode}

\makeatletter
\def\BState{\State\hskip-\ALG@thistlm}
\makeatother
\pagenumbering{gobble}

\newcolumntype{L}[1]{>{\raggedright\let\newline\\\arraybackslash\hspace{0pt}}m{#1}}
\newcolumntype{C}[1]{>{\centering\let\newline\\\arraybackslash\hspace{0pt}}m{#1}}
\newcolumntype{R}[1]{>{\raggedleft\let\newline\\\arraybackslash\hspace{0pt}}m{#1}}

\usepackage{cuted}
\setlength\stripsep{3pt plus 1pt minus 1pt}

% Extend acronym package with first letter caps
\makeatletter
\newif\ifAC@uppercase@first%
\def\Aclp#1{\AC@uppercase@firsttrue\aclp{#1}\AC@uppercase@firstfalse}%
\def\AC@aclp#1{%
	\ifcsname fn@#1@PL\endcsname%
	\ifAC@uppercase@first%
	\expandafter\expandafter\expandafter\MakeUppercase\csname fn@#1@PL\endcsname%
	\else%
	\csname fn@#1@PL\endcsname%
	\fi%
	\else%
	\AC@acl{#1}s%
	\fi%
}%
\def\Acp#1{\AC@uppercase@firsttrue\acp{#1}\AC@uppercase@firstfalse}%
\def\AC@acp#1{%
	\ifcsname fn@#1@PL\endcsname%
	\ifAC@uppercase@first%
	\expandafter\expandafter\expandafter\MakeUppercase\csname fn@#1@PL\endcsname%
	\else%
	\csname fn@#1@PL\endcsname%
	\fi%
	\else%
	\AC@ac{#1}s%
	\fi%
}%
\def\Acfp#1{\AC@uppercase@firsttrue\acfp{#1}\AC@uppercase@firstfalse}%
\def\AC@acfp#1{%
	\ifcsname fn@#1@PL\endcsname%
	\ifAC@uppercase@first%
	\expandafter\expandafter\expandafter\MakeUppercase\csname fn@#1@PL\endcsname%
	\else%
	\csname fn@#1@PL\endcsname%
	\fi%
	\else%
	\AC@acf{#1}s%
	\fi%
}%
\def\Acsp#1{\AC@uppercase@firsttrue\acsp{#1}\AC@uppercase@firstfalse}%
\def\AC@acsp#1{%
	\ifcsname fn@#1@PL\endcsname%
	\ifAC@uppercase@first%
	\expandafter\expandafter\expandafter\MakeUppercase\csname fn@#1@PL\endcsname%
	\else%
	\csname fn@#1@PL\endcsname%
	\fi%
	\else%
	\AC@acs{#1}s%
	\fi%
}%
\edef\AC@uppercase@write{\string\ifAC@uppercase@first\string\expandafter\string\MakeUppercase\string\fi\space}%
\def\AC@acrodef#1[#2]#3{%
	\@bsphack%
	\protected@write\@auxout{}{%
		\string\newacro{#1}[#2]{\AC@uppercase@write #3}%
	}\@esphack%
}%
\def\Acl#1{\AC@uppercase@firsttrue\acl{#1}\AC@uppercase@firstfalse}
\def\Acf#1{\AC@uppercase@firsttrue\acf{#1}\AC@uppercase@firstfalse}
\def\Ac#1{\AC@uppercase@firsttrue\ac{#1}\AC@uppercase@firstfalse}
\def\Acs#1{\AC@uppercase@firsttrue\acs{#1}\AC@uppercase@firstfalse}
%\robustify\Aclp
%\robustify\Acfp
%\robustify\Acp
%\robustify\Acsp
%\robustify\Acl
%\robustify\Acf
%\robustify\Ac
%\robustify\Acs
%\makeatother

\newtheorem{theorem}{Theorem}

% Acronyms
\acrodef{PAPR}{peak-to-average-power ratio}
\acrodef{APAC}{aperiodic auto correlation}
\acrodef{OFDM}{orthogonal frequency division multiplexing}
\acrodef{DFT}{discrete Fourier transform}
\acrodef{IDFT}{inverse discrete Fourier transform}
\acrodef{DC}{direct current}
\acrodef{CS}{complementary sequence}
\acrodef{GCP}{Golay complementary pair}
\acrodef{ANF}{algebraic normal form}
\acrodef{PSK}{phase shift keying}
\acrodef{QAM}{quadrature amplitude modulation}
\acrodef{QPSK}{quadrature phase shift keying}
\acrodef{GDJ}{Golay-Davis-Jedwab}
\acrodef{PMEPR}{peak-to-mean envelope power ratios}
\acrodef{FFT}{fast Fourier transform}
\acrodef{BER}{bit-error rate}
\acrodef{SNR}{signal-to-noise ratio}
\acrodef{4G}{Fourth Generation}
\acrodef{5G}{Fifth Generation}
\acrodef{NR}{New Radio}
\acrodef{LTE}{Long-Term Evolution}
\acrodef{PTS}{partial transmit sequences}
\acrodef{PSD}{power spectral density}
\acrodef{LDPC}{low-density parity check}
\acrodef{OCB}{occupied channel bandwidth}
\acrodef{CP}{cyclic prefix}
\acrodef{CM}{cubic metric}
\acrodef{UE}{user equipment}
\acrodef{DAC}{digital-to-analog converter}
\acrodef{RS}{reference symbol}
\acrodef{LAA}{licensed-assisted access}
\acrodef{eLAA}{enhanced licensed-assisted access}
\acrodef{RB}{resource block}
\acrodef{OCB}{occupied channel bandwidth}
\acrodef{NR-U}{NR-Unlicensed}
\acrodef{IMD}{inter-modulation distortion}
\acrodef{ZC}{Zadoff-Chu}
\acrodef{SR}{scheduling request}
\acrodef{i.i.d}{independent-and-identically distributed}
\acrodef{NC}{non-contiguous}

% Manuscript
%%%%%%%%%%%%%%%%%%%%%%%%%%%%%%%%%%%%%%%%%%%%%%%%%%%%%%%%%%%%%%%%%%%%
\begin{document}
\title{ 
A Reliable Uplink Control Channel Design with Complementary Sequences
}
%Non-contiguous Complementary Sequence Encoder
%A Generalized Complementary Sequence Encoder
\author{Alphan~\c{S}ahin,~\IEEEmembership{Member,~IEEE,}
        and~Rui~Yang,~\IEEEmembership{Member,~IEEE}
\thanks{Alphan~\c{S}ahin and Rui~Yang are affiliated with InterDigital, Huntington Quadrangle, Melville, NY. email: \{alphan.sahin, rui.yang\}@interdigital.com}}
\maketitle

\begin{abstract}
In this study, 
%we propose two schemes for uplink control channels of 3GPP \ac{NR} systems based on non-contiguous \acp{CS}  where the \ac{PAPR} of the resulting \ac{OFDM} signal is always less than or equal to 3 dB.
we propose two schemes for uplink control channels based on non-contiguous \acp{CS} where the \ac{PAPR} of the resulting \ac{OFDM} signal is always less than or equal to $3$~dB. 
To obtain the proposed schemes, we extend Golay's concatenation and interleaving methods by considering extra upsampling and shifting parameters. The proposed schemes enable a flexible non-contiguous resource allocation in frequency, e.g., an arbitrary number of null symbols between the occupied \acp{RB}. 
The first scheme separates the \ac{PAPR} minimization and the inter-cell interference minimization problems. While the former is solved by spreading the sequences in a \ac{GCP} with the sequences in another \ac{GCP}, the latter is managed by designing a set of \acp{GCP} with low cross-correlation. 
%This scheme also addresses the multi-user scenarios by exploiting the properties of unimodular sequences. 
%The second scheme makes \acp{RS} and data symbols multiplexed on each \ac{RB} as part of encoded \ac{CS}.
The second scheme generates \acp{RS} and data symbols on each \ac{RB} as parts of an encoded \ac{CS}.
Therefore, it enables coherent detection at the receiver side.
The numerical results show that the proposed schemes offer significantly improved \ac{PAPR} and \ac{CM} results in case of non-contiguous resource allocation as compared to the sequences defined in 3GPP \ac{NR} and \ac{ZC} sequences.
\end{abstract}

\begin{IEEEkeywords} Control channels, cubic metric, complementary sequences, PAPR,  OFDM, unlicensed spectrum \end{IEEEkeywords}

\acresetall

\section{Introduction}

Non-contiguous resource allocation in the frequency domain is a well-known method to enhance the reliability of a link via frequency diversity gain. However, it is often  demoted or left as an optional feature  as it can cause \ac{IMD} products located outside of the bandwidth; and therefore may violate the emission requirements. On the other hand, for unlicensed bands, non-contiguous resource allocation is considered as a baseline  in today's major standards such as 3GPP \ac{LTE} \ac{eLAA}, MulteFire, and 3GPP \ac{NR-U}. 
The main reason behind the non-contiguous  resource  allocation is that it enables multiple accessing  in the uplink while allowing a radio to increase the transmit power under stringent \ac{PSD} and \ac{OCB} requirements imposed by the regulatory agencies.
For example, based on ETSI regulations \cite{etsi_2017}, the \ac{PSD} of the transmitted signal should be less than $10$~dBm/Mhz while the \ac{OCB} should be larger than $80\%$ of the nominal channel bandwidth in the $5$~GHz band. Therefore, the maximum transmit power of an uplink signal which consists of only a single \ac{RB} (e.g., $180$~kHz in \ac{LTE}), will be limited to $10$~dBm and the narrow bandwidth transmission will violate the regulations due to the \ac{OCB} requirement. 
To be able to increase the transmit power under the \ac{PSD} constraint, while complying with the \ac{OCB} requirement, non-contiguous resource allocation is adopted in \ac{LTE} \ac{eLAA} uplink, which is a major difference as compared to the one for legacy \ac{LTE}. The basic unit of the resource allocation for \ac{LTE} \ac{eLAA} data channels is defined as an {\em interlace} which is composed of $10$ equally-spaced \acp{RB} within a 20 MHz  bandwidth. A similar non-contiguous allocation, but more flexible in terms of bandwidth and subcarrier spacing, is also expected to be considered in \ac{NR-U}.

%The \ac{PAPR} or \ac{CM} of the a signal which has multiple \acp{RB} separated in frequency can be high. Note that 
The instantaneous peak power of \ac{OFDM} signal with arbitrary information symbols in frequency can be high, which can degrade the transmission power efficiency and decrease the coverage range of a link due to the power back-off. Non-contiguous resource allocation introduces an additional challenging constraint on \ac{PAPR} minimization.  This issue can also be a detrimental factor for the reliability, particularly when transmitting very short packet with one or two \ac{OFDM} symbols for latency reduction. In this study, we address the issue of the high instantaneous peak power of an \ac{OFDM} symbol with a non-contiguous resource allocation and consider the cases where the transmitter needs to transmit a small amount of information such as ACK/NACK or \ac{SR} in the uplink.

In the literature, there are many approaches  investigating \ac{PAPR} minimization for \ac{OFDM} \cite{Rahmatallah_2013}. For example, with \ac{PTS} \cite{Muller_1997}, additional phase rotations are applied to the symbol groups in frequency such that the resulting signal has low \ac{PAPR}. 
However, \ac{PTS} can increase the overhead as the receiver may need to know the rotations. 
Companding transform is another widely-used method which compensates the distortion from hardware non-linearity at the expense of higher \ac{BER} \cite{Huang_2004}. Another approach is \ac{DFT} precoding\cite{Sari_1995}, i.e., \ac{DFT}-spread \ac{OFDM}, where the multicarrier structure of a plain \ac{OFDM} symbol is effectively converted to a wideband single carrier waveform \cite{sahin_2018icc}. It substantially decreases the fluctuations in time when the resource allocation is contiguous in the frequency and low-order modulation symbols are utilized. It also allows several methods  such as frequency domain windowing to decrease the \ac{PAPR} further \cite{Sahin_2016}. On the other hand,  in cases of non-contiguous resource allocation after \ac{DFT} precoding, the low-\ac{PAPR} benefit of \ac{DFT}-spread \ac{OFDM} diminishes as it loses its single carrier structure. 
The approaches that take the encoding into account for reducing \ac{PAPR} may require a joint design that typically imposes additional constraints on coding structure, modulation type, and waveform parameters such as resource allocation. For example, by exhaustive search for parity bits  that lead to low \ac{PAPR} \cite{Jones_1994} or by using offsets from linear code \cite{Jones_1996} are several methods that place constraints on the parity bits. In \cite{Daoud_2009}, Daoud and Alani proposed to use \ac{LDPC} codes to mitigate the \ac{PAPR} of \ac{OFDM} symbols via exhaustive search. In \cite{Tsai_2008}, various interleavers were proposed for Turbo encoder to reduce \ac{PAPR}.
To mitigate \ac{PAPR} via encoding, a remarkable method has been established with \acp{CS} \cite{Golay_1961},  especially after the connection between \acp{CS} and Reed-Muller codes was discovered by Davis and Jedwab \cite{davis_1999}. However, synthesizing \acp{CS} for a given resource allocation is still a challenging task. Recently, a theoretical framework was proposed to synthesize a \ac{CS} with null symbols, i.e., non-contiguous \ac{CS} \cite{Sahin_2018}. Nevertheless, the practical applications of non-contiguous \acp{CS} are still in their early stage.

In this study, we propose two schemes for the uplink control channel where the \ac{PAPR} of the resulting \ac{OFDM} symbol is restricted below a certain level by exploiting the non-contiguous \acp{CS} obtained via Theorem \ref{th:golayIterative} given in Section \ref{sec:csPucch}. The first scheme enables non-coherent detection at the receiver while mitigating the interference from other cells in the network with well-designed \acp{CS} that restrict the \ac{PAPR} to be less than or equal to $3$ dB. 
%It addresses the multi-user scenarios by exploiting the properties of unimodular sequences. 
This scheme can be considered as an extension of the uplink control channel Format 0 in 3GPP 
\ac{NR} developed for licensed bands \cite{nr_phy_2017}. 
The second scheme exploits the properties of Theorem \ref{th:golayIterative} and yields an \ac{OFDM} symbol which includes built-in reference symbols as part of encoded \ac{CS}. Therefore, it enables coherent detection at the receiver side.

The rest of the paper is organized as follows. In Section \ref{sec:prelim}, we provide preliminary discussions on the polynomial representation of sequences and \acp{GCP}. In Section \ref{sec:csPucch}, we provide Theorem \ref{th:golayIterative} and discuss the proposed schemes. In Section \ref{sec:numerical}, we present numerical results and compare them with other potential approaches. We conclude the paper in Section \ref{sec:conclusion}.

{\em Notation:} The field of complex numbers,  the set of integers, and the set of positive integers are denoted by $\complexNumbers$, $\integers$, and $\integersPositive$, respectively. 
%The set of $\numberOfIterations$-dimensional integers where each element is in $\integers_\numberOfPointsForPSK$  is denoted by $\integers^\numberOfIterations_\numberOfPointsForPSK$.  
%The symbol $\otimes$ and $\odot$ denote Kronecker product and Hadamard product, respectively. 
%The symbol $\circ$ represents function composition.
%Kronecker product, Hadamard product, and function composition are represented by $\otimes$, $\odot$, and $\circ$, respectively.
%The transpose operation, Hermitian operation, $(\cdot)^{\rm T}$, $(\cdot)^{\rm H}$, and
The symbols $\constanti$, $\constantMinusi$, $\constantOne$, and $\constantMinusOne$ denote $\sqrt{-1}$, $-\sqrt{-1}$, $1$, and $-1$, respectively. 
A sequence of length $\lengthGaGb$ is represented by $\seqGa= (\eleGa[0],\eleGa[1],\dots, \eleGa[\lengthGaGb-1])$. The element-wise complex conjugation and the element-wise absolute operation are denoted by  $(\cdot)^*$ and $|\cdot|$, respectively. 
The operator $\flipConjugate[\seqGa]$ reverses the order of the elements of $\seqGa$ and applies element-wise complex conjugation.
The operation $\upsampleOp[\upsampleVarA][\seqGa]$ introduces $\upsampleVarA-1$ null symbols between the elements of $\seqGa$.
The operations $\seqGa\pm\seqGb$,  $\seqGa \odot \seqGb$, $\seqGa*\seqGb$, and $\langle\seqGa,\seqGb\rangle$ are the element-wise summation/subtraction, the element-wise multiplication, linear convolution, and the inner product of $\seqGa$ and $\seqGb$, respectively.
%, and given by  $\seqGa\pm\seqGb=(\eleGa[0]\pm\eleGb[0],\eleGa[1]\pm\eleGb[1],\dots, \eleGa[\lengthGaGb-1]\pm\eleGb[\lengthGaGb-1])$ and $\seqGa\odot\seqGb=(\eleGa[0]\odot\eleGb[0],\eleGa[1]\odot\eleGb[1],\dots, \eleGa[\lengthGaGb-1]\odot\eleGa[\lengthGaGb-1])$. 

\section{Preliminaries and Further Notation}
\label{sec:prelim}
\subsection{Polynomial Representation of a Sequence}
\label{subsec:poly}
The polynomial representation of the sequence $\seqGa$ can be given by
\begin{align}
\polySeq[\seqGa][\polyVariable] \triangleq \eleGa[\lengthGaGb-1]\polyVariable^{\lengthGaGb-1} + \eleGa[\lengthGaGb-2]\polyVariable^{\lengthGaGb-2}+ \dots + \eleGa[0]~,
\label{eq:polySeq}
\end{align} 
where $\polyVariable\in \complexNumbers$ is a complex number.  One can show that the polynomial $\polySeq[\seqGa][\polyVariable^\upsampleVarA]$,  $\polySeq[\seqGa][\polyVariable^\upsampleVarA]\polySeq[\seqGb][\polyVariable^\upsampleVarB]$, and $\polySeq[\seqGa][\polyVariable]\polyVariable^\separationVar$ represent the up-sampled sequence $\seqGa$ with the factor of $\upsampleVarA\in\integersPositive$,
the convolution of the up-sampled sequence $\seqGa$ with the factor of $\upsampleVarA$ and the up-sampled sequence $\seqGb$ with the factor of $\upsampleVarB\in\integersPositive$, and   the sequence $\seqGa$ padded with $\separationVar\in\integersPositive$ null symbols, respectively. In addition, by restricting $\polyVariable$ to be on the unit circle in the complex plane, i.e., $\polyVariable\in\{\constante^{\constanti\frac{2\pi\timeVar}{\symbolDuration}}| 0\le\timeVar <\symbolDuration
\}$, the polynomial representation given in \eqref{eq:polySeq} corresponds to an \ac{OFDM} symbol in continuous time  where the elements of the sequence $\seqGa$ are mapped to the subcarriers with the same order and $\symbolDuration$ denotes the \ac{OFDM} symbol duration.

\subsection{Golay Complementary Pair and Complementary Sequence}
The sequence pair  $(\seqGa,\seqGb)$ of length  $\lengthGaGb$ is called a \ac{GCP} if
\begin{align}
\apac[\seqGa][\lagForCorrelation]+\apac[\seqGb][\lagForCorrelation] = 0,~~ \text{for}~ \lagForCorrelation\neq0~
\label{eq:GCP}
\end{align}
where  $\apac[\seqGa][\lagForCorrelation]$   is the \ac{APAC} of the sequence $\seqGa$ given by
\begin{align}
\apac[\seqGa][\lagForCorrelation] \triangleq 
\begin{cases}
\apacPositive[\seqGa][\lagForCorrelation], 		& \text{if}~ \lagForCorrelation \ge 0\\
\apacPositive[\seqGa][-\lagForCorrelation]^*,     	&\text{if}~ \lagForCorrelation < 0
\end{cases}~,
\nonumber%\label{eq:apac}
\end{align}
and $\apacPositive[\seqGa][\lagForCorrelation]=\sum_{\indexEleOfSeq=0}^{\lengthGaGb-\lagForCorrelation-1} \eleGa[\indexEleOfSeq]^*\eleGa[\indexEleOfSeq+\lagForCorrelation] $ for
$\lagForCorrelation = 0,1,\dots,\lengthGaGb-1$ and $0$ otherwise.  The sequences $\seqGa$ and $\seqGb$ are defined as \acp{CS}. By using the definition of \ac{GCP}, one can show that the GCP $(\seqGa,\seqGb)$ satisfies 
\begin{align}
\polySeq[\seqGa][\polyVariable]\polySeq[\seqGa^*][\polyVariable^{-1}]+\polySeq[\seqGb][\polyVariable]\polySeq[\seqGb^*][\polyVariable^{-1}] =\apac[\seqGa][0]+\apac[\seqGb][0]~.
\label{eq:zDomainGCP}
\end{align}
By restricting $\polyVariable$ to be on the unit circle as a further condition, \eqref{eq:zDomainGCP} can be written as
\begin{align}
|\polySeq[\seqGa][\polyVariable]|^2+|\polySeq[\seqGb][\polyVariable]|^2\bigg\rvert_{\polyVariable=\constante^{\constanti\frac{2\pi\timeVar}{\symbolDuration}}} =\underbrace{\apac[\seqGa][0]+\apac[\seqGb][0]}_{\text{constant}}~.
\label{eq:timeDomainGCP}
\end{align}

 The main property that we inherited from \acp{GCP} in this study is that the instantaneous peak power of the corresponding \ac{OFDM} signal generated through a \ac{CS} $\seqGa$ is bounded, i.e., $\max|\polySeq[\seqGa][\polyVariable]|^2 \le \apac[\seqGa][0]+\apac[\seqGb][0]$. Therefore, based on \eqref{eq:timeDomainGCP}, the \ac{PAPR} of the \ac{OFDM} signal is less than or equal to $10\log_{10}(2)\approx3$~dB if $\apac[\seqGa][0]=\apac[\seqGb][0]$ \cite{boyd_1986}. For the other interesting properties of \acp{GCP}, we refer the reader to an excellent survey given in \cite{parker_2003}.

\section{Complementary Sequence-Based Interlace Design}
\label{sec:csPucch}

\begin{figure}[t]
	\centering
 	{\includegraphics[width =3.0in]{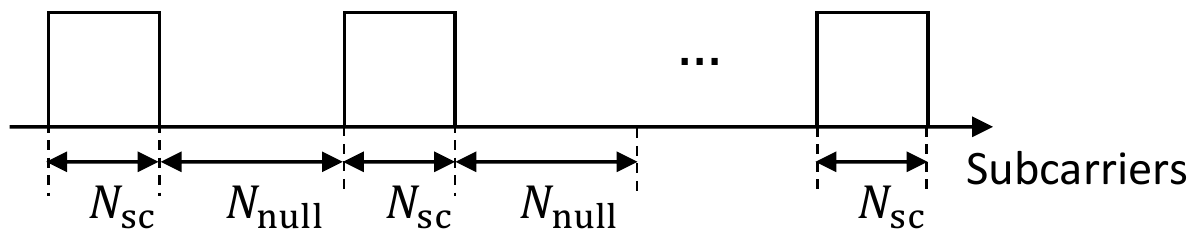}
 	}
 	\caption{Interlace model.}
 	\label{fig:interlace}
 \end{figure} 
We model an interlace as a non-contiguous resource allocation which consists of $\numberOfClusters$ \acp{RB} each of which composed of $\RBsize$ subcarriers where the \acp{RB} are separated by $\numberOfNulls$ tones in the frequency domain as shown in \figurename~\ref{fig:interlace}. Based on this interlace model, we consider two schemes where the first one primarily targets non-coherent detectors at receiver and the second scheme enables coherent detectors by yielding to \acp{RS} in each \ac{RB}. To explain the origin of the proposed schemes, we first restate the following theorem which generalizes Golay's concatenation and interleaving methods \cite{Golay_1961,parker_2003}:
\begin{theorem}
	\label{th:golayIterative}
Let $(\seqGa,\seqGb)$ and $(\seqGc,\seqGd)$ be \acp{GCP} of length $\lengthGaGb$ and $\lengthGcGd$, respectively, and $\angleGolay[1],\angleGolay[2]\in \{u:u\in\complexNumbers, |u|=1\}$  and $\upsampleVarA, \upsampleVarB, \separationVar\in\integers$. Then, the sequences $\seqGf$ and $\seqGg$ where their polynomial representations given by
\begin{align}
\polySeq[\seqGf][\polyVariable] =& \angleGolay[1]\polySeq[{\seqGa}][\polyVariable^\upsampleVarA]\polySeq[{\seqGc}][\polyVariable^\upsampleVarB]   
+ \angleGolay[2]\polySeq[{\seqGb}][\polyVariable^\upsampleVarA]\polySeq[{\seqGd}][\polyVariable^\upsampleVarB]  \polyVariable^{\separationVar} ~, 
\label{eq:gcpGf}
\\
\polySeq[\seqGg][\polyVariable] =& \angleGolay[1]\polySeq[{\seqGa}][\polyVariable^\upsampleVarA]\polySeq[{\flipConjugate[\seqGd]}][\polyVariable^\upsampleVarB] 
- \angleGolay[2]\polySeq[{\seqGb}][\polyVariable^\upsampleVarA]\polySeq[{\flipConjugate[\seqGc]}][\polyVariable^\upsampleVarB]  \polyVariable^{\separationVar} ~, 
\label{eq:gcpGg}
\end{align}
construct a \ac{GCP}.
\end{theorem}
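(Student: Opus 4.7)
The plan is to prove that $(\seqGf,\seqGg)$ is a GCP by establishing the equivalent $z$-domain identity \eqref{eq:zDomainGCP}, namely
\begin{align*}
\polySeq[\seqGf][\polyVariable]\polySeq[\seqGf^*][\polyVariable^{-1}]+\polySeq[\seqGg][\polyVariable]\polySeq[\seqGg^*][\polyVariable^{-1}] = \text{constant},
\end{align*}
rather than manipulating aperiodic correlations directly. This is the natural tool here because both \eqref{eq:gcpGf} and \eqref{eq:gcpGg} are already expressed in polynomial form, and because the hypothesis that $(\seqGa,\seqGb)$ and $(\seqGc,\seqGd)$ are GCPs can be fed in as the two polynomial identities
$\polySeq[\seqGa][\polyVariable^\upsampleVarA]\polySeq[\seqGa^*][\polyVariable^{-\upsampleVarA}]+\polySeq[\seqGb][\polyVariable^\upsampleVarA]\polySeq[\seqGb^*][\polyVariable^{-\upsampleVarA}]=\apac[\seqGa][0]+\apac[\seqGb][0]$ and similarly for $(\seqGc,\seqGd)$ with $\polyVariable^\upsampleVarB$ (these are obtained by substituting $\polyVariable^\upsampleVarA$ and $\polyVariable^\upsampleVarB$ into \eqref{eq:zDomainGCP}, which is valid since \eqref{eq:zDomainGCP} is a polynomial identity in $\polyVariable$).

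The first technical step is to record how $\flipConjugate[\cdot]$ acts at the polynomial level. A direct computation from the definition shows $\polySeq[\flipConjugate[\seqGc]][\polyVariable^\upsampleVarB]=\polyVariable^{\upsampleVarB(\lengthGcGd-1)}\polySeq[\seqGc^*][\polyVariable^{-\upsampleVarB}]$ and, consequently, $\polySeq[(\flipConjugate[\seqGc])^*][\polyVariable^{-\upsampleVarB}]=\polyVariable^{-\upsampleVarB(\lengthGcGd-1)}\polySeq[\seqGc][\polyVariable^\upsampleVarB]$; the analogous relations hold for $\seqGd$. I would substitute these into $\polySeq[\seqGg][\polyVariable]\polySeq[\seqGg^*][\polyVariable^{-1}]$ so that the global factor $\polyVariable^{\upsampleVarB(\lengthGcGd-1)}\cdot\polyVariable^{-\upsampleVarB(\lengthGcGd-1)}=1$ cancels and the product is expressed purely in terms of $\polySeq[\seqGa][\polyVariable^\upsampleVarA]$, $\polySeq[\seqGa^*][\polyVariable^{-\upsampleVarA}]$, $\polySeq[\seqGb][\polyVariable^\upsampleVarA]$, $\polySeq[\seqGb^*][\polyVariable^{-\upsampleVarA}]$, $\polySeq[\seqGc][\polyVariable^\upsampleVarB]$, $\polySeq[\seqGc^*][\polyVariable^{-\upsampleVarB}]$, $\polySeq[\seqGd][\polyVariable^\upsampleVarB]$, $\polySeq[\seqGd^*][\polyVariable^{-\upsampleVarB}]$, and the scalar factors $\angleGolay[1]$, $\angleGolay[2]$, $\polyVariable^{\pm\separationVar}$.

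Next, I would expand both products $\polySeq[\seqGf][\polyVariable]\polySeq[\seqGf^*][\polyVariable^{-1}]$ and $\polySeq[\seqGg][\polyVariable]\polySeq[\seqGg^*][\polyVariable^{-1}]$ into four terms each. The two ``diagonal'' terms from $\polySeq[\seqGf][\polyVariable]\polySeq[\seqGf^*][\polyVariable^{-1}]$ carry the coefficients $|\angleGolay[1]|^2=|\angleGolay[2]|^2=1$ and involve $\seqGa\seqGc$ and $\seqGb\seqGd$, while the two diagonal terms from $\polySeq[\seqGg][\polyVariable]\polySeq[\seqGg^*][\polyVariable^{-1}]$ involve $\seqGa\seqGd$ and $\seqGb\seqGc$. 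The critical observation, and the step that really makes the theorem work, is that the two pairs of off-diagonal cross terms are negatives of each other: the minus sign in \eqref{eq:gcpGg} together with the swap $\seqGc\leftrightarrow\seqGd$ under $\flipConjugate[\cdot]$ flips the sign of the cross terms in the $\seqGg$ product relative to those in the $\seqGf$ product, while the $\polyVariable^{\pm\separationVar}$ factors and the products of $\angleGolay[1],\angleGolay[2]^*$ match exactly. All four cross terms therefore cancel.

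Finally, after cancellation the sum factors as
\begin{align*}
\bigl(\polySeq[\seqGa][\polyVariable^\upsampleVarA]\polySeq[\seqGa^*][\polyVariable^{-\upsampleVarA}]+\polySeq[\seqGb][\polyVariable^\upsampleVarA]\polySeq[\seqGb^*][\polyVariable^{-\upsampleVarA}]\bigr)\bigl(\polySeq[\seqGc][\polyVariable^\upsampleVarB]\polySeq[\seqGc^*][\polyVariable^{-\upsampleVarB}]+\polySeq[\seqGd][\polyVariable^\upsampleVarB]\polySeq[\seqGd^*][\polyVariable^{-\upsampleVarB}]\bigr),
\end{align*}
and invoking \eqref{eq:zDomainGCP} on each factor gives $(\apac[\seqGa][0]+\apac[\seqGb][0])(\apac[\seqGc][0]+\apac[\seqGd][0])$, a constant independent of $\polyVariable$; hence $(\seqGf,\seqGg)$ is a GCP. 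I expect the main obstacle to be purely bookkeeping: one has to keep the $\flipConjugate[\cdot]$ swap, the sign in \eqref{eq:gcpGg}, the complex conjugation on $\angleGolay[1],\angleGolay[2]$, and the $\polyVariable^{\pm\separationVar}$ factors perfectly aligned so that the four cross terms genuinely pair up and cancel; the use of $|\angleGolay[1]|=|\angleGolay[2]|=1$ is needed precisely to make the four diagonal terms combine cleanly into the factored form above.
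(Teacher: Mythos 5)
Your proposal is correct and follows essentially the same route as the paper: both expand $\polySeq[\seqGf][\polyVariable]\polySeq[\seqGf^*][\polyVariable^{-1}]+\polySeq[\seqGg][\polyVariable]\polySeq[\seqGg^*][\polyVariable^{-1}]$, use the identity $\polySeq[{\flipConjugate[\seqGc]}][\polyVariable^{\upsampleVarB}]=\polyVariable^{\upsampleVarB(\lengthGcGd-1)}\polySeq[{\seqGc^*}][\polyVariable^{-\upsampleVarB}]$ (and its analogue for $\seqGd$) so the cross terms cancel against the minus sign in \eqref{eq:gcpGg}, and then factor the remaining diagonal terms into the product of the two GCP constants. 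The only cosmetic difference is that you phrase the criterion as the Laurent-polynomial identity \eqref{eq:zDomainGCP} while the paper writes it as constancy of $|\polySeq[\seqGf][\polyVariable]|^2+|\polySeq[\seqGg][\polyVariable]|^2$; the underlying algebra is identical.
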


\begin{proof}
Since the sequence pairs $(\seqGa,\seqGb)$ and $(\seqGc,\seqGd)$ are \acp{GCP}, by the definition, $|\polySeq[{\seqGa}][\polyVariable]|^2 + |\polySeq[{\seqGb}][\polyVariable]|^2=C_1$ and $|\polySeq[{\seqGc}][\polyVariable]|^2 + |\polySeq[{\seqGd}][\polyVariable]|^2=C_2$, where $C_1$ and $C_2$ are some constants.
To prove that the sequences $\seqGf$ and $\seqGg$ generated through \eqref{eq:gcpGf} and \eqref{eq:gcpGg} construct a GCP, we  need to show that $|\polySeq[{\seqGf}][\polyVariable]|^2+|\polySeq[{\seqGg}][\polyVariable]|^2$ is also a constant. 
 By exploiting the fact the polynomial representation of the sequence $\flipConjugate[\seqGa]$ can be calculated as $\polySeq[{\flipConjugate[\seqGa]}][\polyVariable^\upsampleVarA]=\polySeq[{\seqGa^*}][\polyVariable^{-\upsampleVarA}]\polyVariable^{\upsampleVarA\lengthGaGb-\upsampleVarA}$, one can calculate $|\polySeq[{\seqGf}][\polyVariable]|^2+|\polySeq[{\seqGg}][\polyVariable]|^2$ as
\begin{align}
|\polySeq[{\seqGf}][\polyVariable]&|^2+|\polySeq[{\seqGg}][\polyVariable]|^2 \nonumber
\\=
&~~~
(\angleGolay[1]\polySeq[{\seqGa}][\polyVariable^\upsampleVarA]\polySeq[{\seqGc}][\polyVariable^\upsampleVarB] 
+ \angleGolay[2]\polySeq[{\seqGb}][\polyVariable^\upsampleVarA]\polySeq[{\seqGd}][\polyVariable^\upsampleVarB]  \polyVariable^{\separationVar})
\nonumber\\
&\times 
(\angleGolay[1]^*\polySeq[{\seqGa^*}][\polyVariable^{-\upsampleVarA}]\polySeq[{\seqGc^*}][\polyVariable^{-\upsampleVarB}]   
+ \angleGolay[2]^*\polySeq[{\seqGb^*}][\polyVariable^{-\upsampleVarA}]\polySeq[{\seqGd^*}][\polyVariable^{-\upsampleVarB}]  \polyVariable^{-\separationVar})\nonumber\\
&+
(\angleGolay[1]\polySeq[{\seqGa}][\polyVariable^\upsampleVarA]\polySeq[{\flipConjugate[\seqGd]}][\polyVariable^\upsampleVarB] 
- \angleGolay[2]\polySeq[{\seqGb}][\polyVariable^\upsampleVarA]\polySeq[{\flipConjugate[\seqGc]}][\polyVariable^\upsampleVarB]  \polyVariable^{\separationVar} )
\nonumber\\&\times
(\angleGolay[1]^*\polySeq[{\seqGa^*}][\polyVariable^{-\upsampleVarA}]\polySeq[{\flipConjugate[\seqGd^*]}][\polyVariable^{-\upsampleVarB}] 
- \angleGolay[2]^*\polySeq[{\seqGb^*}][\polyVariable^\upsampleVarA]\polySeq[{\flipConjugate[\seqGc]^*}][\polyVariable^{-\upsampleVarB}]  \polyVariable^{-\separationVar} )
\nonumber \\
\stackrel{(a)}{=}&~~~ 
\polySeq[{\seqGa}][\polyVariable^\upsampleVarA]\polySeq[{\seqGa^*}][\polyVariable^{-\upsampleVarA}]
\polySeq[{\seqGc}][\polyVariable^\upsampleVarB]\polySeq[{\seqGc^*}][\polyVariable^{-\upsampleVarB}]
\nonumber \\
&+\polySeq[{\seqGa}][\polyVariable^\upsampleVarA]\polySeq[{\seqGa^*}][\polyVariable^{-\upsampleVarA}]
\polySeq[\flipConjugate[{\seqGd}]][\polyVariable^\upsampleVarB]\polySeq[\flipConjugate[{\seqGd}]^*][\polyVariable^{-\upsampleVarB}]
\nonumber \\
&+\polySeq[{\seqGb}][\polyVariable^\upsampleVarA]\polySeq[{\seqGb^*}][\polyVariable^{-\upsampleVarA}]
\polySeq[\flipConjugate[{\seqGc}]][\polyVariable^\upsampleVarB]\polySeq[\flipConjugate[{\seqGc}]^*][\polyVariable^{-\upsampleVarB}]
\nonumber \\
&+\polySeq[{\seqGb}][\polyVariable^\upsampleVarA]\polySeq[{\seqGb^*}][\polyVariable^{-\upsampleVarA}]
\polySeq[{\seqGd}][\polyVariable^\upsampleVarB]\polySeq[{\seqGd^*}][\polyVariable^{-\upsampleVarB}]
\nonumber\\
\stackrel{(b)}{=}&~~~ 
(\polySeq[{\seqGa}][\polyVariable^\upsampleVarA]\polySeq[{\seqGa^*}][\polyVariable^{-\upsampleVarA}]
+\polySeq[{\seqGb}][\polyVariable^\upsampleVarA]\polySeq[{\seqGb^*}][\polyVariable^{-\upsampleVarA}])
\nonumber \\
&\times (\polySeq[{\seqGc}][\polyVariable^\upsampleVarB]\polySeq[{\seqGc^*}][\polyVariable^{-\upsampleVarB}]+
\polySeq[{\seqGd}][\polyVariable^\upsampleVarB]\polySeq[{\seqGd^*}][\polyVariable^{-\upsampleVarB}])
\nonumber\\
\stackrel{(c)}{=}&~~~ C_1C_2
\end{align}
where (a) follows from $\polySeq[\flipConjugate[{\seqGc}]^*][\polyVariable^{-\upsampleVarB}]\polySeq[{\flipConjugate[\seqGd]}][\polyVariable^{\upsampleVarB}]=\polySeq[{\seqGc}][\polyVariable^{\upsampleVarB}]\polySeq[{\seqGd^*}][\polyVariable^{-\upsampleVarB}]$ and $\polySeq[\flipConjugate[{\seqGc}]][\polyVariable^{\upsampleVarB}]\polySeq[{\flipConjugate[\seqGd]^*}][\polyVariable^{-\upsampleVarB}]=\polySeq[{\seqGc^*}][\polyVariable^{-\upsampleVarB}]\polySeq[{\seqGd}][\polyVariable^{\upsampleVarB}]$, (b) is because  $\polySeq[\flipConjugate[{\seqGc}]][\polyVariable^\upsampleVarB]\polySeq[\flipConjugate[{\seqGc}]^*][\polyVariable^{-\upsampleVarB}]=\polySeq[{\seqGc^*}][\polyVariable^{-\upsampleVarB}]\polySeq[{\seqGc}][\polyVariable^\upsampleVarB]$ and $\polySeq[\flipConjugate[{\seqGd}]][\polyVariable^\upsampleVarB]\polySeq[\flipConjugate[{\seqGd}]^*][\polyVariable^{-\upsampleVarB}] = \polySeq[{\seqGd^*}][\polyVariable^{-\upsampleVarB}]\polySeq[{\seqGd}][\polyVariable^\upsampleVarB]
$ and (c) is because of the definition of a \ac{GCP}.
\end{proof}

Theorem \ref{th:golayIterative} is practically appealing since it can generate \acp{CS} with null symbols through the careful choice of the parameters $\separationVar$, $\upsampleVarA$, and $\upsampleVarB$  by starting from two \acp{GCP}. Therefore, it can be utilized for generating sequences for a non-contiguous resource allocation. In the following subsections, we use the relationships given in Section \ref{subsec:poly} for the polynomials and exploit Theorem \ref{th:golayIterative} to construct an interlace with the desired resource allocation in the frequency domain.

\subsection{Non-coherent Scheme}

In this scheme, we consider a unimodular sequence, i.e., a sequence where the amplitude of each element is $1$, for each \ac{RB} in an interlace. Unimodular sequences are suitable for an uplink control channel since  certain cyclic-shifts of an \ac{OFDM} signal generated through a unimodular sequence are orthogonal to each other \cite{Benedetto_2009}, which enables code-domain multiple access. Let $\seqGx=(\elex[0],\elex[1],\dots, \elex[\RBsize-1])$ be a unimodular base sequence, i.e., $|\elex[i]|=1$ for $i\in0,1,,\dots,\RBsize-1$. Then, by representing the cyclic-shift in time domain as modulation in frequency domain,
\begin{align}
\langle
\seqGyShift[{\shift[1]}],\seqGyShift[{\shift[2]}]\rangle = 0 ~ \text{if}~ \shift[1] \neq \shift[2]~,
\end{align}
where
\begin{align}
\seqGyShift[{\shift[]}] = \seqGx \odot \seqGs 
\label{eq:shifting}
\end{align}
for $\seqGs = {(\exponentialBase^{0\times\shift[]},\exponentialBase^{1\times\shift[]}, \dots, \exponentialBase^{(\RBsize-1)\times\shift[]})}$,  $\exponentialBase=\constante^{\constanti\frac{2\pi}{\RBsize}}$ and $\shift[1],\shift[2],\shift[]\in\integers$. When unimodular sequences are employed on each \ac{RB}, the number of orthogonal resources generated through cyclic-shifts in time is limited to the size of \acp{RB}. For example, if there are $\numberOfClusters=10$ \acp{RB} in one interlace and each \ac{RB} consists of $\RBsize=12$ subcarriers, there are $12$ orthogonal resources generated through the shifts in {\em time}. 12 resources can be shared by 6 users to transmit 1-bit information (e.g., ACK/NACK) or 3 users to transmit 2-bit information (e.g., ACK/NACK and scheduling request). The receiver can detect the corresponding sequences in each \ac{RB} non-coherently  to decode the information. 
%This also means that using partial RB (e.g., 6 subcarriers per cluster) does not increase the multiplexing capacity since the number of available cyclic-shifts per interlaces reduces. 
%Another option is to use different cyclic-shifts per clusters to increase the orthogonal resources. However, this choice can cause arbitrarily high \ac{PAPR}/\ac{CM}. Hence, in this study, we adopt only cyclic-shifts per interlace for uplink control channel to multiplex the users.
%To achieve unimodular sequences with low-complexity transmitter and receiver, the constellation in frequency for the clusters can be based on a \ac{PSK} alphabet. Using a low-order \ac{PSK} constellation for the alphabet can decrease the transmitter and receiver complexity. Hence, in this study, we consider only \ac{QPSK} alphabet for interlace.

\def\setC{C}
\def\setD{D}
\def\setQ{Q_1}
\def\setQr{Q_2}
\def\numberOfSequences{K}
\def\indexSequence{i}

\begin{figure*}[t]
	\centering
	\subfloat[Transmitter structure for non-coherent scheme.]{\includegraphics[width =2.65in]{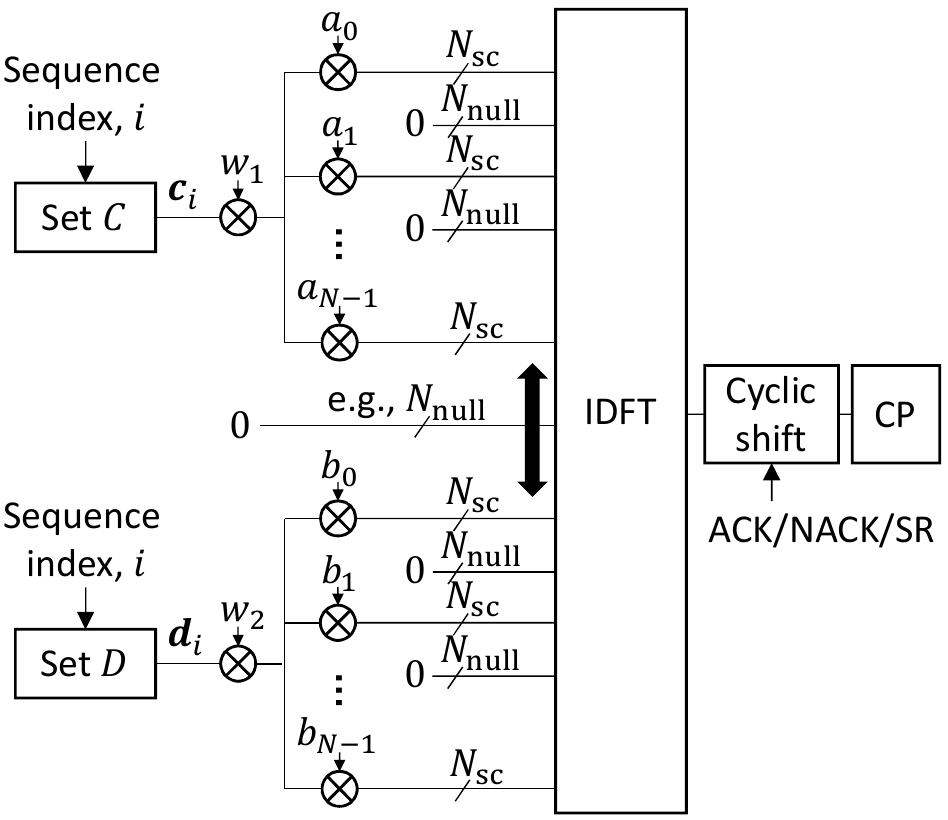}
	\label{fig:nco}}~~~~~~~~~~~~~~~~~~~
	\subfloat[Transmitter structure for coherent scheme.]{\includegraphics[width =2.0in]{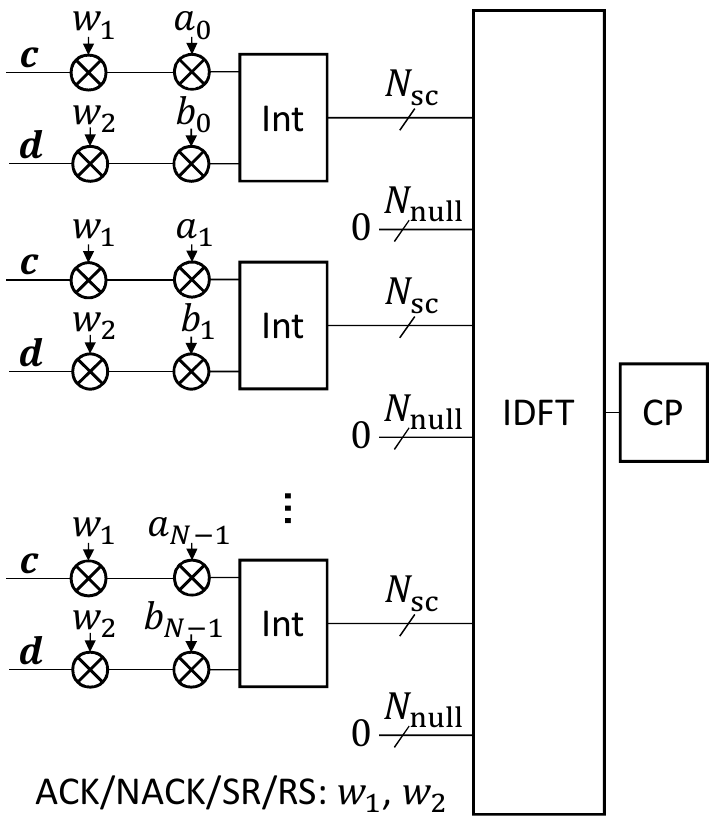}
		\label{fig:co}}
	\caption{Proposed transmitter structures based on \acp{CS}}
	\label{fig:tx}
\end{figure*}

%\begin{figure}[t]
%	\centering
%	{\includegraphics[width =2.4in]{txCoherent_cropped.pdf}
%	}
%	\caption{Proposed transmitter structure.}
%	\label{fig:tx}
%\end{figure}
%\begin{figure}[t]
%	\centering
%	{\includegraphics[width =3.1in]{txNoncoherent_cropped.pdf}
%	}
%	\caption{Proposed transmitter structure.}
%	\label{fig:tx}
%\end{figure}
%\begin{figure}[t]
%	\centering
%	{\includegraphics[width =3.5in]{}
%	}
%	\caption{Interlace structure with the proposed scheme.}
%	\label{fig:interlace}
%\end{figure}

To construct the interlace with the proposed scheme, we first choose a \ac{GCP} $(\seqGa,\seqGb)$ of length $\numberOfClusters/2$  and a \ac{GCP} $(\seqGc_\indexSequence,\seqGd_\indexSequence)$ of length $\RBsize$  where the elements of $\seqGa$, $\seqGb$, $\seqGc_\indexSequence$ and $\seqGd_\indexSequence$ are in the set $\setQ\triangleq\{\constantOne,\constantMinusOne,\constanti,\constantMinusi\}$
for $\indexSequence=1,2,\mydots,\numberOfSequences$ where $\numberOfSequences\in\integersPositive$.  We then generate the interlace through \eqref{eq:gcpGf} in Theorem \ref{th:golayIterative} by setting $\seqGc=\seqGc_\indexSequence$, $\seqGd=\seqGd_\indexSequence$, $\angleGolay[1]=\angleGolay[2]=\constante^{\frac{\constanti\pi}{4}}$, $\upsampleVarA = \RBsize+\numberOfNulls$, $\upsampleVarB =1$, and $\separationVar = (\RBsize+\numberOfNulls)\times\numberOfClusters/2$. 

In this scheme, $\seqGa$ and $\seqGb$ act as spreading sequences for the  sequences $\seqGc_\indexSequence$ and $\seqGd_\indexSequence$, respectively, since
\begin{align}
\polySeq[{\seqGa}][\polyVariable^{\RBsize+\numberOfNulls}]\polySeq[{\seqGc}][\polyVariable]=\polySeq[{\upsampleOp[{\RBsize+\numberOfNulls}][\seqGa]*\seqGc}][\polyVariable],
\end{align}
and
\begin{align}
\polySeq[{\seqGb}][\polyVariable^{\RBsize+\numberOfNulls}]\polySeq[{\seqGd}][\polyVariable]=\polySeq[{\upsampleOp[{\RBsize+\numberOfNulls}][\seqGb]*\seqGd}][\polyVariable].
\end{align}
In other words, the \acp{RB} are constructed with phase rotated versions of $\seqGc_\indexSequence$ and $\seqGd_\indexSequence$ and the phase rotations are determined by the elements of $\seqGa$ and $\seqGb$ as shown in \figurename~\ref{fig:tx}\subref{fig:nco}. Based on the second part of \eqref{eq:gcpGf} in Theorem \ref{th:golayIterative}, $\separationVar = (\RBsize+\numberOfNulls)\times\numberOfClusters/2$ null symbols are prepended to the sequence ${\upsampleOp[{\RBsize+\numberOfNulls}][\seqGb]*\seqGd_\indexSequence}$. Hence, while the first half of the interlace is a function of $\seqGa$ and $\seqGc_\indexSequence$, the second half of the interlace is generated through $\seqGb$ and $\seqGd_\indexSequence$ as illustrated in  \figurename~\ref{fig:tx}\subref{fig:nco}. 
By considering the interlace structure in \ac{LTE} \ac{eLAA}, i.e., $\RBsize=12$, $\numberOfClusters=10$,  $\numberOfNulls = 9\times12=108$ subcarriers, and \ac{QPSK} alphabet for the sequences, the interlace can be constructed with the proposed scheme via \eqref{eq:gcpGf} when $\upsampleVarA =120$, $\upsampleVarB =1$, and $\separationVar = 600$ and the sequences $\seqGa$, $\seqGb$, $\seqGc_\indexSequence$, $\seqGd_\indexSequence$ are chosen arbitrarily from the set provided from \cite{holzmann_1991} as $\seqGa = (\constantOne,\constantOne,\constantOne,\constantMinusi,\constanti)$, $\seqGb = (\constantOne,\constanti,\constantMinusOne,\constantOne,\constantMinusi)$, $\seqGc_\indexSequence =  (\constantOne,\constantOne,\constantOne,\constantOne,\constantMinusOne,\constantMinusOne,\constantMinusOne,\constantOne,\constanti,\constantMinusi,\constantMinusOne,\constantOne)$, and $\seqGd_\indexSequence = (\constantOne,\constantOne,\constanti,\constanti,\constantOne,\constantOne,\constantMinusOne,\constantOne,\constantOne,\constantMinusOne,\constantOne,\constantMinusOne)$.

The proposed scheme offers flexibility for the interlace structure since $\numberOfNulls$ and $\separationVar$ can be arbitrary values. For example, in one scenario, another user in the cell may need to transmit a random access signal by using a contiguous allocation\footnote{Contiguous allocation for random access signals typically improve the timing accuracy with a simple receiver.}. To address this scenario, using a larger $\separationVar$ and a {\em single} pair of shorter spreading sequences $\seqGa$ and $\seqGb$ can generate the desired gap in the frequency; and the \ac{PAPR} is still less than or equal to 3 dB for the same $\seqGc_\indexSequence$ and $\seqGd_\indexSequence$. A quick investigation also suggests that the interlace where $\seqGc_\indexSequence$ and $\seqGd_\indexSequence$ are on the adjacent \acp{RB} can be constructed when  $\upsampleVarA = 2(\RBsize+\numberOfNulls)$, $\upsampleVarB =1$, and $\separationVar = \RBsize+\numberOfNulls$, which is another design option offered by the proposed scheme. In this case, the interlace with \ac{LTE} parameters can be obtained for $\upsampleVarA = 240$, $\upsampleVarB =1$, and $\separationVar = 120$.

\def\thershold{\beta}
\def\seedSeqGcCandidate[#1]{{\seqGc_{#1}''}}
\def\seedSeqGdCandidate[#1]{{\seqGd_{#1}''}}
\def\seqGcCandidate[#1]{{\seqGc_{#1}'}}
\def\seqGdCandidate[#1]{{\seqGd_{#1}'}}
\def\upsampleValueU{u}
\def\seedA{S_c''}
\def\seedB{S_d''}
\def\seedAp{{S_c'}}
\def\seedBp{{S_d'}}
\def\seedSize{I}
\def\seedSizeEq{J}
\def\setS{S}
\def\indexForFirst{i}
\def\indexForSecond{j}
\def\correlationCalc{\alpha}

The proposed schemes simplify the control channel design as it separates two complicated problems, i.e., the \ac{PAPR} minimization for non-contiguous allocation and inter-cell interference minimization. For the first problem, as long as the sequences $\seqGc_\indexSequence$ and $\seqGd_\indexSequence$ for $\indexSequence=1,2,\mydots,\numberOfSequences$ construct a \ac{GCP}, the same spreading $\seqGa$ and $\seqGb$ can be utilized to limit the \ac{PAPR}. Note that \ac{NR-U} may be configured with multiple subcarrier spacing options, e.g., 15, 30, or 60 KHz, and bandwidth, e.g., 20, 40, or 80 MHz. For different configurations,  a single \ac{GCP} is able to limit \ac{PAPR}  to less than or equal to $3$~dB for all of the sequences in $\setC\triangleq\{\seqGc_1,\seqGc_2,\mydots,\seqGc_\numberOfSequences\}$ and $\setD\triangleq\{\seqGd_1,\seqGd_2,\mydots,\seqGd_\numberOfSequences\}$, which remarkably reduces the design complexity. For the second problem, the cross-correlation between the sequences used at different cells, i.e., $\seqGc_\indexSequence$ and $\seqGc_j$ for $\indexSequence\neq j$ (and  $\seqGd_\indexSequence$ and $\seqGd_j$), should be as low as possible to minimize the potential interference among the different cells.
Due to the imperfect timing alignment between the uplink signals and the multipath channel, the signal may also be exposed to additional shift in time within the \ac{CP}. In this case, the cross-correlation between the sequences in the sets should consider not only integer shifts, but also non-integer $\shift[]{}$ values in \eqref{eq:shifting}. In \ac{NR} and \ac{LTE}, the number of available base sequences  is set to $\numberOfSequences=30$ for $\RBsize = 12$. However,  designing $\setC$ and $\setD$ with a small peak cross-correlation value $\thershold$, i.e., $\langle\seqGc_\indexSequence,\seqGc_j\odot \seqGs \rangle\le\thershold$ and  $\langle\seqGd_\indexSequence,\seqGd_j\odot \seqGs \rangle\le\thershold$  for $\indexSequence\neq j$ and $\indexSequence,j\in \{1,2,\mydots,\numberOfSequences\}$ and $\shift[]{}\in[0,\RBsize-1]$ is challenging task. Therefore, re-using the sets for different configurations is highly desirable. 
%The proposed method can re-use the sequences designed for clusters, i.e., $\seqGc$ and $\seqGd$ by only changing $\numberOfClusters$, $\numberOfNulls$, and defining only a single \ac{GCP} $(\seqGa,\seqGb)$ for different subcarrier spacing and bandwidth configuration. 
This naturally leads to the following question for the proposed scheme: {\em Are there any $\setC$ and $\setD$ for $\numberOfSequences=30$ and $\RBsize=12$, i.e., 30 different \acp{GCP} of length 12, such that the peak cross-correlation between any two sequences in each set for any non-integer $\shift[]{}$ value that is sufficiently small?}

%To answer this question, we consider the  procedure described in Algorithm \ref{alg:mincorrelationGolay}. 
To answer this question, we consider a  procedure which exploits computer-generated \acp{GCP} provided in \cite{holzmann_1991} for length 12 to obtain $\setC$ and $\setD$. We initialize the algorithm with $\seedSize=52$ \acp{GCP} of length 12 listed in  \cite{holzmann_1991} and populate as $\seedA=\{\seedSeqGcCandidate[1],\mydots,\seedSeqGcCandidate[\seedSize]\}$ and $\seedB=\{\seedSeqGdCandidate[1],\mydots,\seedSeqGdCandidate[\seedSize]\}$. For the $\indexForFirst$th seed \ac{GCP} $({\seedSeqGcCandidate[\indexForFirst]},{\seedSeqGdCandidate[\indexForFirst]})$, we first enumerate $\seedSizeEq=8$ equivalent \acp{GCP} by interchanging, reflecting both (i.e., reversing the order of the elements of the sequences), and conjugate reflecting original sequences in the seed \ac{GCP}, which lead to the sets $\seedAp=\{\seqGcCandidate[1],\mydots,\seqGcCandidate[\seedSizeEq]\}$ and  $\seedBp=\{\seqGdCandidate[1],\mydots,\seqGdCandidate[\seedSizeEq]\})$. Because of the properties of \ac{GCP}, the $(\seqGcCandidate[\indexForSecond],\seqGdCandidate[\indexForSecond])$ still constructs \acp{GCP} for $\indexForSecond=1,\mydots,\seedSizeEq$. For a given candidate \ac{GCP} $(\seqGcCandidate[\indexForSecond],\seqGdCandidate[\indexForSecond])$, we calculate $\langle\seqGc_\indexSequence,\seqGcCandidate[\indexForSecond]\odot \seqGs \rangle$ and  $\langle\seqGd_\indexSequence,\seqGdCandidate[\indexForSecond]\odot \seqGs \rangle$  for $\seqGc_\indexSequence\in\setC$ and $\seqGd_\indexSequence\in\setD$ and  $\shift[]\in\{0,1/\lengthGcGd\upsampleValueU,\mydots,(\lengthGcGd\upsampleValueU-1)/\lengthGcGd\upsampleValueU\}$ and  $\upsampleValueU>1$. If the results are less than or equal to $\thershold$ for all $\shift[]$, we update $\setC$ and $\setD$ by including the sequences in the candidate \ac{GCP} to the sets. 
%We summarize the  algorithm described in Algorithm \ref{alg:mincorrelationGolay}. 

\def\seqCelep[#1][#2]{\phi_{#1#2}}
\def\seqDelep[#1][#2]{\theta_{#1#2}}

\begin{table}[t]
    \centering
	\caption{The sequences in $\setC$ and $\setD$}
	\label{table:seqC}
		\begin{tabular}{r|cc}
$\indexSequence$  & $\seqGc_\indexSequence$ & $\seqGd_\indexSequence$ \\
\hline
1  & ({+},{-},{i},{j},{+},{-},{-},{-},{+},{+},{+},{+}) & ({-},{+},{-},{+},{+},{-},{+},{+},{j},{j},{+},{+}) \\
2  & ({+},{-},{j},{i},{+},{-},{-},{-},{+},{+},{+},{+}) & ({-},{+},{-},{+},{+},{-},{+},{+},{i},{i},{+},{+}) \\
3  & ({+},{+},{+},{+},{i},{+},{-},{j},{+},{-},{-},{+}) & ({+},{+},{-},{-},{i},{+},{+},{i},{+},{-},{+},{-}) \\
4  & ({+},{-},{-},{+},{i},{-},{+},{j},{+},{+},{+},{+}) & ({-},{+},{-},{+},{j},{+},{+},{j},{-},{-},{+},{+}) \\
5  & ({+},{+},{+},{+},{j},{+},{-},{i},{+},{-},{-},{+}) & ({+},{+},{-},{-},{j},{+},{+},{j},{+},{-},{+},{-}) \\
6  & ({+},{-},{-},{+},{j},{-},{+},{i},{+},{+},{+},{+}) & ({-},{+},{-},{+},{i},{+},{+},{i},{-},{-},{+},{+}) \\
7  & ({+},{+},{+},{+},{i},{-},{+},{j},{+},{-},{-},{+}) & ({+},{+},{-},{-},{i},{-},{-},{i},{+},{-},{+},{-}) \\
8  & ({+},{-},{-},{+},{i},{+},{-},{j},{+},{+},{+},{+}) & ({-},{+},{-},{+},{j},{-},{-},{j},{-},{-},{+},{+}) \\
9  & ({+},{+},{+},{+},{j},{-},{+},{i},{+},{-},{-},{+}) & ({+},{+},{-},{-},{j},{-},{-},{j},{+},{-},{+},{-}) \\
10 & ({+},{-},{-},{+},{j},{+},{-},{i},{+},{+},{+},{+}) & ({-},{+},{-},{+},{i},{-},{-},{i},{-},{-},{+},{+}) \\
11 & ({+},{+},{-},{+},{-},{j},{+},{j},{-},{+},{+},{+}) & ({-},{-},{+},{-},{j},{-},{i},{-},{-},{+},{+},{+}) \\
12 & ({+},{+},{-},{+},{-},{i},{+},{i},{-},{+},{+},{+}) & ({-},{-},{+},{-},{i},{-},{j},{-},{-},{+},{+},{+}) \\
13 & ({+},{+},{+},{-},{-},{i},{-},{j},{-},{+},{-},{-}) & ({+},{+},{+},{-},{j},{+},{j},{-},{+},{-},{+},{+}) \\
14 & ({+},{+},{+},{-},{-},{j},{-},{i},{-},{+},{-},{-}) & ({+},{+},{+},{-},{i},{+},{i},{-},{+},{-},{+},{+}) \\
15 & ({+},{+},{-},{+},{+},{j},{-},{j},{-},{+},{+},{+}) & ({-},{-},{+},{-},{j},{+},{i},{+},{-},{+},{+},{+}) \\
16 & ({+},{+},{-},{+},{+},{i},{-},{i},{-},{+},{+},{+}) & ({-},{-},{+},{-},{i},{+},{j},{+},{-},{+},{+},{+}) \\
17 & ({+},{+},{+},{-},{+},{i},{+},{j},{-},{+},{-},{-}) & ({+},{+},{+},{-},{j},{-},{j},{+},{+},{-},{+},{+}) \\
18 & ({+},{+},{+},{-},{+},{j},{+},{i},{-},{+},{-},{-}) & ({+},{+},{+},{-},{i},{-},{i},{+},{+},{-},{+},{+}) \\
19 & ({+},{+},{+},{-},{i},{i},{+},{-},{+},{+},{-},{+}) & ({+},{+},{+},{-},{-},{-},{j},{i},{-},{-},{+},{-}) \\
20 & ({+},{-},{+},{+},{-},{+},{j},{j},{-},{+},{+},{+}) & ({-},{+},{-},{-},{j},{i},{-},{-},{-},{+},{+},{+}) \\
21 & ({+},{+},{+},{-},{j},{j},{-},{+},{+},{+},{-},{+}) & ({+},{+},{+},{-},{+},{+},{i},{j},{-},{-},{+},{-}) \\
22 & ({+},{-},{+},{+},{+},{-},{i},{i},{-},{+},{+},{+}) & ({-},{+},{-},{-},{i},{j},{+},{+},{-},{+},{+},{+}) \\
23 & ({+},{+},{+},{i},{-},{+},{-},{-},{i},{+},{-},{+}) & ({+},{+},{+},{i},{-},{+},{+},{+},{j},{-},{+},{-}) \\
24 & ({+},{+},{+},{j},{-},{+},{-},{-},{j},{+},{-},{+}) & ({+},{+},{+},{j},{-},{+},{+},{+},{i},{-},{+},{-}) \\
25 & ({+},{+},{-},{+},{+},{+},{j},{i},{-},{-},{-},{+}) & ({+},{+},{-},{+},{j},{j},{+},{-},{+},{+},{+},{-}) \\
26 & ({+},{-},{-},{-},{j},{i},{+},{+},{+},{-},{+},{+}) & ({-},{+},{+},{+},{-},{+},{i},{i},{+},{-},{+},{+}) \\
27 & ({+},{+},{-},{+},{-},{-},{i},{j},{-},{-},{-},{+}) & ({+},{+},{-},{+},{i},{i},{-},{+},{+},{+},{+},{-}) \\
28 & ({+},{-},{-},{-},{i},{j},{-},{-},{+},{-},{+},{+}) & ({-},{+},{+},{+},{+},{-},{j},{j},{+},{-},{+},{+}) \\
29 & ({+},{+},{-},{+},{i},{+},{-},{i},{-},{-},{+},{+}) & ({+},{+},{-},{+},{i},{+},{+},{j},{+},{+},{-},{-}) \\
30 & ({+},{+},{-},{-},{j},{-},{+},{j},{+},{-},{+},{+}) & ({-},{-},{+},{+},{i},{+},{+},{j},{+},{-},{+},{+})
	\end{tabular}
	\vspace{-10pt}
\end{table}

We list the sets obtained for $\seqGc_\indexSequence$ and $\seqGd_\indexSequence$ in Table \ref{table:seqC} when $\thershold=0.715$ and $\upsampleValueU=128$. Based on the aforementioned procedure, 
%Algorithm \ref{alg:mincorrelationGolay}, 
we could not obtain $\setC$ and $\setD$ when $\thershold<0.715$ for $\numberOfSequences=30$ and $\RBsize=12$. However, the numerical results given in Section~\ref{sec:numerical} show that the maximum cross-correlation is still less than the ones for \ac{ZC} sequences and the sequences adopted in \ac{NR}. It is also worth noting that the sets obtained for $\seqGc_\indexSequence$ and $\seqGd_\indexSequence$ are not unique and depend on the initial seed sequences.

\subsection{Coherent Scheme}
%As opposed to fixing $\angleGolay[1]$ and $\angleGolay[2]$, we consider to modulate 
Theorem~\ref{th:golayIterative} allows $\angleGolay[1]$ and $\angleGolay[2]$ to be any unit-norm complex numbers. Hence, 
%it allows sequence modulation and enables a scheme which can demodulated by coherent receivers. 
it indeed enables a scheme which can carry 2 \ac{QPSK} symbols via sequence modulation while inheriting the low \ac{PAPR} benefit of \acp{CS}. By investigating the transmitter given in \figurename~\ref{fig:tx}\subref{fig:nco}, it is straightforward to modulate the sequences on different \acp{RB} by mapping the information bits, i.e., ACK/NACK or \ac{SR}, to $\angleGolay[1]$ and $\angleGolay[2]$, where $\angleGolay[1],\angleGolay[2]\in \setQr\triangleq\{ \constante^{\frac{\constanti\pi}{4}},\constante^{\frac{\constanti3\pi}{4}},\constante^{-\frac{\constanti\pi}{4}},\constante^{-\frac{\constanti3\pi}{4}} \}$. Although this approach has its own merits, the resulting scheme would require another \ac{OFDM} symbol for  channel estimation. On the other hand, some applications with more strict latency constraints may require a framework where a set of \acp{RS} appears in each \ac{RB} for the sake of channel estimation. The question is then if there exists any set of parameters, i.e., $\upsampleVarA, \upsampleVarB, \separationVar$ and initial \acp{GCP}, i.e., $(\seqGa,\seqGb)$ of length $\lengthGaGb$ and $(\seqGc,\seqGd)$  of length $\lengthGcGd$ such that it leads to an interlace with desired resource allocation in frequency while yielding to \acp{RS} in each \ac{RB}.

To address this question, we consider a similar strategy based on Golay's interleaving method and employ Theorem~\ref{th:golayIterative} by choosing $\upsampleVarA = \RBsize\times\numberOfClusters$, $\upsampleVarB=2$, $\separationVar=1$, the initial \acp{GCP}, i.e., $(\seqGa,\seqGb)$ of length $\lengthGaGb=\numberOfClusters$ and $(\seqGc,\seqGd)$ of length $\lengthGcGd=\RBsize/2$, and the elements of $\seqGa$, $\seqGb$, $\seqGc$, and $\seqGd$ be in the set of $\setQ$. Similar to the non-coherent scheme, $\seqGa$ and $\seqGb$ act as spreading sequences for the upsampled sequences $\seqGc$ and $\seqGd$ with the factor of 2, respectively, as
\begin{align}
\polySeq[{\seqGa}][\polyVariable^{\RBsize+\numberOfNulls}]\polySeq[{\seqGc}][\polyVariable^2]=\polySeq[{\upsampleOp[{\RBsize+\numberOfNulls}][\seqGa]*\upsampleOp[2][\seqGc])}][\polyVariable],
\label{eq:partOfCSfir}
\end{align}
and
\begin{align}
\polySeq[{\seqGb}][\polyVariable^{\RBsize+\numberOfNulls}]\polySeq[{\seqGd}][\polyVariable^2]=\polySeq[{\upsampleOp[{\RBsize+\numberOfNulls}][\seqGb]*\upsampleOp[2][\seqGd]}][\polyVariable].
\label{eq:partOfCSsec}
\end{align}
However, unlike the non-coherent scheme, each \ac{RB} is constructed based on the interleaved $\seqGc$ and $\seqGd$ since we choose $\separationVar = 1$ and the size of sequences $\seqGc$ and $\seqGd$ to be half of the \ac{RB} size. Since $\angleGolay[1]$ and $\angleGolay[2]$ are the coefficients of \eqref{eq:partOfCSfir} and \eqref{eq:partOfCSsec} in \eqref{eq:gcpGf}, they are essentially multiplied with the interleaved and spread sequences $\seqGc$ and $\seqGd$ in each \ac{RB}, as illustrated in \figurename~\ref{fig:tx}\subref{fig:co}. Thus, by fixing either $\angleGolay[1]$ and $\angleGolay[2]$, the \acp{RS} on each \ac{RB} can be obtained and the \ac{PAPR} of the responding signal is still always less than or equal to $3$~dB. 
For example, an interlace compatible with the \ac{LTE} \ac{eLAA} can be constructed when $\upsampleVarA =120$, $\upsampleVarB =2$, and $\separationVar = 1$ and the sequences $\seqGa$, $\seqGb$, $\seqGc$, $\seqGd$ are chosen from \cite{holzmann_1991} as $\seqGa = (\constantOne,\constantOne,\constantOne,\constantOne,\constantOne,\constantMinusOne,\constantOne,\constantMinusOne,\constantMinusOne,\constantOne)$, $\seqGb = (\constantOne,\constantOne,\constantMinusOne,\constantMinusOne,\constantOne,\constantOne,\constantOne,\constantMinusOne,\constantOne,\constantMinusOne)$, $\seqGc = (\constantOne,\constantOne,\constantOne,\constanti,\constantMinusOne,\constantOne)$, and $\seqGd = (\constantOne,\constantOne,\constantMinusi,\constantMinusOne,\constantOne,\constantMinusOne)$. Note that this scheme can support multiple access by generating orthogonal codes for  $\seqGc$ and $\seqGd$ through \eqref{eq:shifting} since the resulting sequences via \eqref{eq:shifting}, i.e., $(\seqGc\odot\seqGs,\seqGd\odot\seqGs)$, still construct a \ac{GCP}.

\def\indexRB{k}

\section{Numerical Analysis}\label{sec:numerical}
In this section, we compare the proposed schemes with other schemes in the literature numerically. For the simulations, we consider the \ac{LTE} \ac{eLAA} interlace parameters, i.e., $\RBsize=12$, $\numberOfClusters=10$,  $\numberOfNulls = 9\times12=108$ subcarriers. For the proposed non-coherent scheme, we employ the sequences given in Table~\ref{table:seqC} and the spreading sequences as $\seqGa = (\constantOne,\constantOne,\constantOne,\constantMinusi,\constanti)$, $\seqGb = (\constantOne,\constanti,\constantMinusOne,\constantOne,\constantMinusi)$. For the coherent scheme, we assume that $\seqGa = (\constantOne,\constantOne,\constantOne,\constantOne,\constantOne,\constantMinusOne,\constantOne,\constantMinusOne,\constantMinusOne,\constantOne)$, $\seqGb = (\constantOne,\constantOne,\constantMinusOne,\constantMinusOne,\constantOne,\constantOne,\constantOne,\constantMinusOne,\constantOne,\constantMinusOne)$, $\seqGc = (\constantOne,\constantOne,\constantOne,\constanti,\constantMinusOne,\constantOne)$, and $\seqGd = (\constantOne,\constantOne,\constantMinusi,\constantMinusOne,\constantOne,\constantMinusOne)$. For comparison, we consider \ac{NR} sequences \cite{nr_phy_2017} on each \ac{RB} in the interlace and three \ac{PAPR} minimization methods. The first two methods rely on the optimal phase rotation with \ac{QPSK} alphabet for each \ac{RB} for a given sequence, i.e., \ac{PTS} approach, which prioritize \ac{CM} and \ac{PAPR}. The third approach applies a modulation operation to the sequence on each \ac{RB} as a function of the occupied \ac{RB} index. In other words, the sequence on $\indexRB$th occupied \ac{RB} in the interlace is multiplied with the sequence ${(\exponentialBase^{0\times\indexRB},\exponentialBase^{1\times\indexRB}, \dots, \exponentialBase^{(\RBsize-1)\times\indexRB})}$ for $\indexRB=0,1,\mydots,9$. We refer to this operation as {\em cycling} since the signal component located on each \ac{RB} is cyclically shifted in time. For the fourth design, we generate all possible \ac{ZC} sequences of length 113 (cyclically padded to 120) and select the best $30$ sequences based on the \ac{PAPR} of the corresponding signals after they are mapped to the interlace. For the coherent scheme, we consider \ac{NR} sequences with cycling and modulate  every other subcarrier to transmit ACK/NACK.

\subsection{PAPR/CM Results}
In \figurename~\ref{fig:papr}, the \ac{PAPR} distribution for both coherent and non-coherent schemes are provided. For the alternative non-coherent schemes,  the optimal spreading sequences prioritizing  \ac{PAPR} or \ac{CM} to yield a maximum \ac{PAPR} of $5.3$ dB and $5.7$ dB, respectively, while the \ac{ZC} sequences limits the \ac{PAPR} to $6$ dB. The cycling reduces the maximum \ac{PAPR} to $5.9$ dB and $6.4$ dB for non-coherent and coherent schemes, respectively. On the other hand, both of the proposed non-coherent and coherent schemes offer substantially improved \ac{PAPR} results and limit the \ac{PAPR} to 3 dB as it exploits \acp{CS}. The improvements in terms of \ac{PAPR} are in the range of $2.7$ - $3$ dB and $3.4$~dB as compared to alternative non-coherent and coherent schemes.
\begin{figure}[t]
	\centering
	{\includegraphics[width =3.4in]{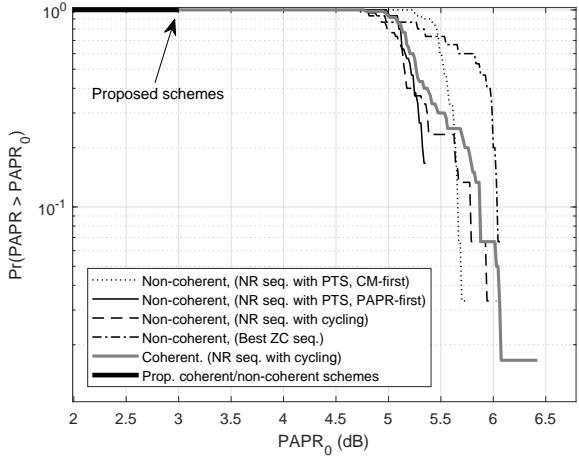}
	\vspace{-10pt}
	}
	\caption{PAPR performance  for different schemes.}
	\label{fig:papr}
\end{figure} 

Another metric that characterizes the fluctuation of the resulting signal is the \ac{CM}. We calculate the \ac{CM} in dB  as
$\cubicMetric = {20 \log_{10}(\operatorRMS[{\vNormCubic}])} / {1.56}$,
where $\vNorm$ is the synthesized signal in time with the power of $1$ \cite{eval_NR}. In \figurename~\ref{fig:cm}, we compare \ac{CM} distributions for the aforementioned schemes. Similar to the \ac{PAPR} results, the proposed schemes improve the \ac{CM} within the range of $0.8$-$1.8$~dB as compared to the alternative approaches, respectively.
\begin{figure}[t]
	\centering
	{\includegraphics[width =3.4in]{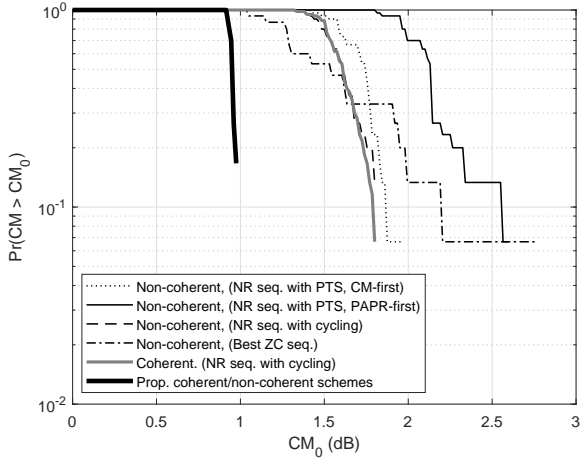}
	}
	\vspace{-10pt}
	\caption{Cubic metric performance for different schemes.}
	\label{fig:cm}
\end{figure}

\def\correlationMetric{\rho}
\def\dftSize{N_{\rm IDFT}}
\def\operationIDFT[#1][#2]{{\rm IDFT}\{#1,#2\}}
\def\operationMAX[#1]{{\rm max}\{#1\}}
\subsection{Peak Cross-correlation Performance}
%In , different methods for cross-correlation measurements are introduced. 
%Method 1 and Method 2 in \cite{eval_NR} can capture the timing misalignment. 
%Method 4 considers a raised cosine function with a roll-off factor of 0.3, which may model the impact of \ac{DAC} \& filters on the baseband signals. On the other hand, one can show that the best achievable signal characteristics (e.g., PAPR, correlation) is limited by DFT-based oversampling. Since the design of \ac{DAC} \& filters depends on the implementation and they should sufficiently suppress the images in the frequency domain. Hence, 
We evaluate the cross-correlation performance of the sequence designed for the non-coherent scheme by calculating
$    \correlationMetric = {\operationMAX[|{\operationIDFT[\seqGx_\indexSequence\odot\seqGx_j^*][\dftSize]}|]} / {\RBsize},
$
where $\seqGx_\indexSequence$ is the $\indexSequence$th sequence in the set, $\indexSequence\neq j$ and $\operationIDFT[\cdot][\dftSize]$ is the unnormalized \ac{DFT} operation of size $\dftSize$ \cite{eval_NR}. To achieve a large oversampling in time, we choose $\dftSize=4096$. In \figurename~\ref{fig:correlation}, we provide the distribution of $\correlationMetric$ for different schemes. The \ac{ZC} sequences fail as the maximum cross-correlation reaches up to $0.95$ while 50 percentile performance is better than the other methods. The maximum correlation for the set of \ac{NR} sequences rises to $0.8$. On the other hand,
the  correlations for the both sets $\setC$ and $\setD$ are less than or equal to $0.715$ as we set $\thershold=0.715$. Hence, the non-coherent scheme with $\setC$ and $\setD$ provides more robustness against potential inter-cell interference  as compared to the one with \ac{NR} sequences.

\begin{figure}[t]
	\centering
	{\includegraphics[width =3.4in]{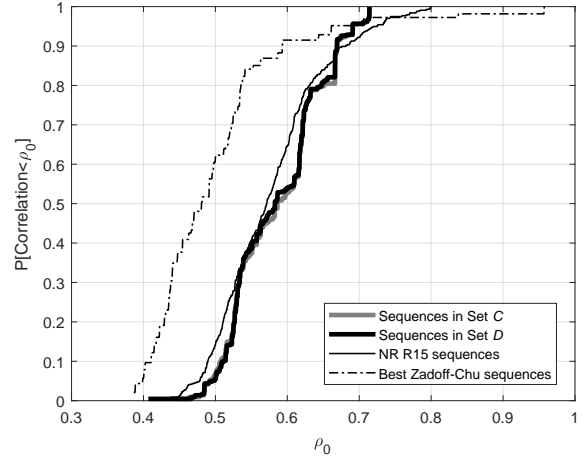}
	}
	\vspace{-10pt}
	\caption{Peak cross-correlation distribution for the sequence sets.}
	\label{fig:correlation}
\end{figure}

\subsection{False Alarm and Miss Detection Results}
In this section, we demonstrate the impact of interlacing on the ACK-to-NACK rate and the ACK miss-detection rate for a given DTX-to-ACK probability, as compared to the single-\ac{RB} approaches. The DTX-to-ACK and NACK-to-ACK rates correspond to the probability of  ACK detection when there is no signal or a NACK is being transmitted, respectively. The ACK miss detection rate is the probability of not detecting ACK when ACK is actually being transmitted. For the single-\ac{RB} approaches, we consider \ac{NR} uplink control channel Format 0, which is a non-coherent scheme, and a coherent scheme which multiplexes \acp{RS} and data in an \ac{RB}. To show the limits, we consider two extreme channel conditions where the occupied \acp{RB} in an interlace experience the same  fading coefficients, i.e., flat fading, or \ac{i.i.d} Rayleigh fading coefficients to model selective fading. In practice, there is always correlation between channel coefficients. However, the correlation can decrease for a large spacing between the occupied \acp{RB} in an interlace. 

In the simulations, we set DTX-to-ACK probability to $1\%$ and consider 2 receive antennas. For the baseband processing, we first detect the energy on the resources. If there is energy, we determine if it is ACK sequence or NACK sequence for non-coherent scheme. For the coherent detection, we estimate the channel and use maximum-ratio combining to combine the symbol energy on each \acp{RB}  to determine if the modulation symbol is ACK or NACK. The results in \figurename~\ref{fig:errorSequencebased} and \figurename~\ref{fig:errorFMDedbased} show that both non-coherent and coherent schemes have the same trends on the NACK-to-ACK and ACK miss-detection rates. In case of flat fading, the interlacing yields results worse than that of the single-RB approaches. This is expected as the baseband processing does not exploit the correlation between the channel coefficients. Otherwise, the performance of the schemes with interlace and single-RB are identical as there is no frequency diversity gain. However, when the channel is frequency-selective, the slopes of the NACK-to-ACK and ACK miss-detection rates change remarkably and the interlacing significantly improves the performance. 

\begin{figure}[t]
	\centering
	{\includegraphics[width =3.4in]{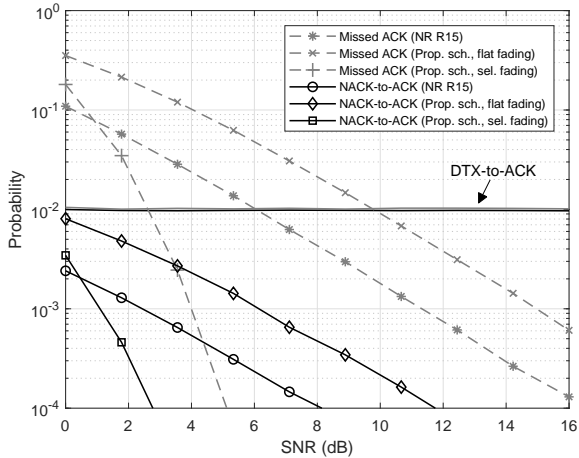}
	}
	\vspace{-10pt}
	\caption{Receiver performance for the proposed non-coherent scheme.}
	\label{fig:errorSequencebased}
\end{figure} 
\begin{figure}[t]
	\centering
	\vspace{-8pt}
	{\includegraphics[width =3.4in]{}
	}
	\vspace{-10pt}
	\caption{Receiver performance for the proposed coherent scheme.}
	\label{fig:errorFMDedbased}
\end{figure} 

\section{Concluding Remarks}\label{sec:conclusion}
%In this study, we discuss two schemes for uplink control channels  which consist of non-contiguous resources in frequency by exploiting \acp{GCP} via Theorem~\ref{th:golayIterative} through a generalized version of Golay's concatenation and interleaving methods.. 
%In this study, we discuss two schemes for uplink control channels  which consist of non-contiguous resources in frequency by using non-contiguous \ac{GCP} based on Theorem~\ref{th:golayIterative}.
%In this study, we first establish Theorem~\ref{th:golayIterative}, which generalizes Golay's concatenation and interleaving methods, to generate non-contiguous \acp{GCP}. We then discuss two schemes for uplink control channels in unlicensed spectrum  by using non-contiguous \acp{GCP}. 
%In this study, we discuss two schemes for uplink control channels  which consist of non-contiguous resources in frequency by exploiting \acp{GCP} via Theorem~\ref{th:golayIterative} through a generalized version of Golay's concatenation and interleaving methods.
In this study, we  establish a theorem which generalizes Golay's concatenation and interleaving methods to generate non-contiguous \acp{GCP}. We then discuss two schemes for uplink control channels in unlicensed spectrum  by using non-contiguous \acp{GCP}. 
The main benefit of the proposed schemes is that they address the \ac{PAPR} problem of \ac{OFDM} signals while allowing a family of flexible non-contiguous resource allocations. For example, the number of null symbols between the \acp{RB} can be adjusted arbitrarily by using the same pair of spreading sequences and the sequences used in the \acp{RB}. In all cases, the \ac{PAPR} of the corresponding signal is less than or equal to $3$~dB. The \ac{PAPR} gain is around 3~dB as compared to other schemes considered in this study.

While the first scheme can be considered as an extension of the uplink control channel Format 0 in \ac{NR} for the operation in an unlicensed spectrum with non-coherent receivers, the second one achieves the same by enabling coherent detection.
The first scheme separates the \ac{PAPR}  and inter-cell interference minimization problems. While the \ac{PAPR} problem is solved by choosing the sequences for \acp{RB} as a \ac{GCP}, the interference problem is addressed by designing a set of \acp{GCP} for \acp{RB}. With an algorithm which exploits the seed \acp{GCP} provided in \cite{holzmann_1991}, we show that there exists a set of \acp{GCP} of size $30$ and length $12$, which achieves a smaller maximum cross-correlation than that of \ac{NR} sequences.
The second scheme exploits degrees of freedom on the phases provided by Theorem~\ref{th:golayIterative} and leads to  interleaved \acp{RS} and data symbols for each \ac{RB} in an interlace.

In this study, we focus on the schemes which can carry a small amount of information, e.g., ACK/NACK or \ac{SR}, with a single \ac{OFDM} symbol while ensuring low \ac{PAPR} and frequency diversity gain. However, there are many cases where a larger amount of information needs to be transmitted. 
Hence, developing more generic methods with \acp{CS} which can support more bits while achieving low \ac{PAPR} and frequency diversity gain  is not only theoretically interesting, but also practically appealing. We also believe that there may be other applications of non-contiguous \acp{GCP}. Further investigation of non-contiguous \acp{GCP} under different topics is highly encouraged.

\bibliographystyle{IEEEtran}
\bibliography{reliableControl}

\end{document}